\newtheorem{theorem}{Theorem}[section]
\newtheorem{lemma}[theorem]{Lemma}
\newtheorem{corollary}[theorem]{Corollary}
\newtheorem{proposition}[theorem]{Proposition}
\newtheorem{definition}[theorem]{Definition}
\newtheorem{remark}[theorem]{Remark}
\newtheorem{example}[theorem]{Example}
\DeclareMathOperator{\Tr}{Tr}
\newcommand{\Id}{\mathbb{I}}
\newcommand{\UU}{U_{\mathrm{UQRAM}}}
\newcommand{\normone}[1]{\left\lVert #1 \right\rVert_1}
\title{Read-Only Opacity and Restricted-Access Inference on Quantum Memories via U-QRAM}
\author{Leonardo Bohac}
\date{}
\begin{document}
\maketitle

\begin{abstract}
Universal QRAM (U-QRAM)~\cite{BohacUqram25} is a fixed, data-independent unitary interface that implements coherent random-access reads relative to a designated computational ``truth-table'' basis on the memory register.
This work studies \emph{restricted-access inference}: the memory register is persistent but inaccessible, while an experimenter may prepare and measure only accessible registers and may invoke the fixed read interaction.

Allowing the memory to be in an arbitrary quantum state (pure or mixed, possibly entangled with an inaccessible reference system, or a coherent superposition of truth tables), we establish a sharp, protocol-independent limitation of read-only access.
For \emph{any} finite-query protocol---including arbitrary accessible ancillas, intermediate measurements, adaptivity, and general CPTP processing between queries---the induced output state on the accessible registers depends on the memory state \(\rho_M\) \emph{only} through its diagonal in the truth-table basis.
Equivalently, read-only access factors through dephasing (pinching) in that basis; coherences between distinct truth tables are operationally invisible.

Consequently, every memory-hypothesis testing task reduces to a standard state-discrimination problem on the accessible registers, and the minimum-error optimal measurement is characterized by Helstrom theory.
We illustrate the framework with three explicit examples: (i) the phase-kickback reduction recovering the one-query Bernstein--Vazirani/Deutsch--Jozsa geometry, (ii) a minimal Helstrom instance with optimal success probability \(3/4\), and (iii) perfect indistinguishability of relative phases in entangled truth-table superpositions.
\end{abstract}

% ============================================================
\section{Introduction}
% ============================================================

A random-access memory is specified less by its physical substrate than by its \emph{interface}: a read operation that returns the contents of a selected cell.
U-QRAM~\cite{BohacUqram25} provides a coherent analogue of this interface.
Crucially, the read interaction is a \emph{fixed and known} unitary, while the memory register itself may be unknown and persistent.
This motivates an information-theoretic question distinct from standard oracle or channel discrimination:
\begin{quote}
Given a fixed U-QRAM device and no direct access to the memory register, what can be inferred about the memory state by probing only through the read interface?
\end{quote}

The present work formalizes this as \emph{restricted-access inference}.
A hypothesis about the stored memory (a state hypothesis) induces, via the fixed interface and a chosen probing protocol, a corresponding hypothesis about the resulting state on the accessible registers.
The latter can be treated using standard quantum-state discrimination theory.

\paragraph{Contributions.}
\begin{itemize}[leftmargin=1.5em]
\item We formalize a general model of read-only restricted-access inference for U-QRAM, allowing arbitrary accessible ancillas, intermediate measurements, adaptivity, and general CPTP processing between read queries.
\item We prove a protocol-independent \emph{read-only opacity} theorem: for any read-only protocol, the induced accessible output depends on the memory state \(\rho_M\) only through its diagonal in the truth-table basis (Theorem~\ref{thm:opacity}).
Equivalently, read-only inference factors through dephasing in that basis, and the induced memory-to-output map is entanglement-breaking (Corollary~\ref{cor:eb}).
\item We show that any binary memory-hypothesis test reduces to Helstrom-optimal discrimination on the induced accessible states (Theorem~\ref{thm:helstrom}), and we give three worked examples.
\end{itemize}

\paragraph{Related work.}
Minimum-error discrimination between quantum states is governed by Helstrom theory~\cite{Helstrom76,Holevo82} (see also~\cite{Barnett09}).
U-QRAM is inspired by the QRAM model of Giovannetti--Lloyd--Maccone~\cite{GLM08}, but emphasizes a data-independent unitary interface as formalized in~\cite{BohacUqram25}.

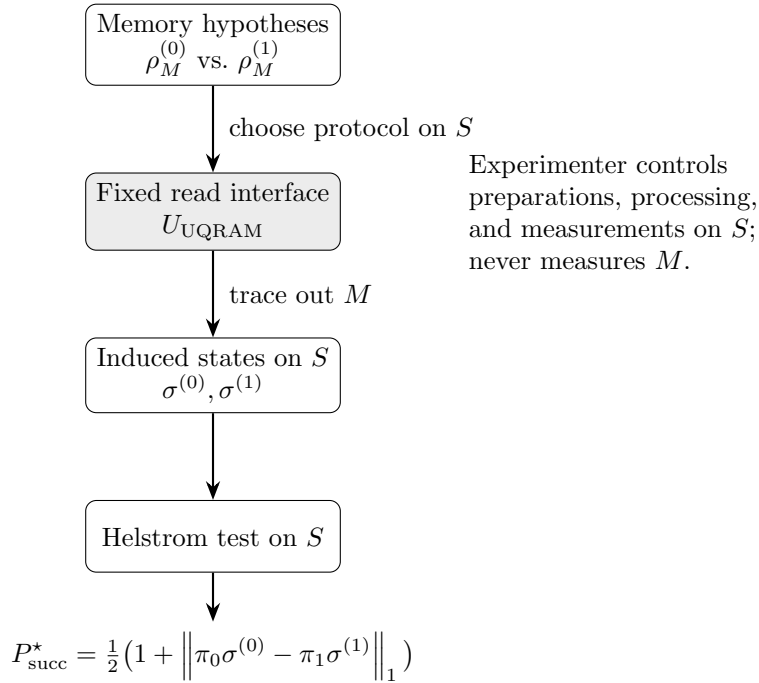
\begin{figure}[ht]
\centering
\begin{tikzpicture}[
    box/.style={draw, rounded corners, minimum width=3.35cm, minimum height=0.95cm, align=center},
    arrow/.style={-{Stealth[length=2.5mm]}, thick},
    every node/.style={font=\small}
]
\node[box] (hyp) {Memory hypotheses\\ \(\rho_M^{(0)}\) vs.\ \(\rho_M^{(1)}\)};
\node[box, below=1.15cm of hyp, fill=gray!15] (uqram) {Fixed read interface\\ \(\UU\)};
\node[box, below=1.15cm of uqram] (induced) {Induced states on \(S\)\\ \(\sigma^{(0)},\sigma^{(1)}\)};
\node[box, below=1.15cm of induced] (helstrom) {Helstrom test on \(S\)};
\node[below=0.65cm of helstrom] (result) {\(\displaystyle P^\star_{\rm succ}=\tfrac12\big(1+\normone{\pi_0\sigma^{(0)}-\pi_1\sigma^{(1)}}\big)\)};

\draw[arrow] (hyp) -- node[right, xshift=2pt] {choose protocol on \(S\)} (uqram);
\draw[arrow] (uqram) -- node[right, xshift=2pt] {trace out \(M\)} (induced);
\draw[arrow] (induced) -- (helstrom);
\draw[arrow] (helstrom) -- (result);

\node[right=1.55cm of uqram, text width=4.0cm, align=left] {Experimenter controls\\ preparations, processing,\\ and measurements on \(S\);\\ never measures \(M\).};
\end{tikzpicture}
\caption{Restricted-access inference: memory hypotheses induce accessible output states via the fixed U-QRAM interface, after which standard state-discrimination theory applies.}
\label{fig:pipeline}
\end{figure}

% ============================================================
\section{U-QRAM as a fixed read interface}
% ============================================================

\subsection{Registers and the truth-table basis}

We consider three registers:
\begin{itemize}[leftmargin=1.5em]
\item Address register \(A\) of \(n\) qubits, with \(N=2^n\) addresses and computational basis \(\{\ket{a}\}_{a=0}^{N-1}\).
\item Data register \(D\) of one qubit.
\item Memory register \(M\) consisting of \(N\) qubits, one per address.
\end{itemize}

We fix the computational basis \(\{\ket{m}\}_{m\in\{0,1\}^N}\) on \(M\), indexing bit-strings \(m=(m_0,\dots,m_{N-1})\) via
\[
\ket{m}_M := \bigotimes_{j=0}^{N-1}\ket{m_j}.
\]
We emphasize that this is the \emph{global} computational basis on the full \(N\)-qubit memory, so a basis vector \(\ket{m}\) represents a complete table \(m:\{0,\dots,N-1\}\to\{0,1\}\).
This basis is singled out because the read interface is \emph{defined} relative to it.
The actual memory state \(\rho_M\) may be arbitrary (including coherent superpositions across truth tables and/or entanglement with an inaccessible reference).

\subsection{The read unitary}

\begin{definition}[U-QRAM read unitary~\cite{BohacUqram25}]
\label{def:uqram}
The U-QRAM read interaction is the unitary \(\UU\) on \(A\otimes D\otimes M\) defined on computational-basis states by
\begin{equation}
\label{eq:uqram}
\UU \ket{a}_A\ket{y}_D\ket{m}_M
=
\ket{a}_A\ket{y\oplus m_a}_D\ket{m}_M,
\end{equation}
and extended by linearity.
\end{definition}

On basis states \(\ket{m}\), the memory register is left invariant.
For general coherent \(\rho_M\), \(\UU\) may entangle \(M\) with \(A\otimes D\); this interaction is central to the opacity phenomenon studied below.

\subsection{Controlled decomposition and phase kickback}

\begin{lemma}[Controlled decomposition]
\label{lem:controlled}
There exist unitaries \(\{V_m\}_{m\in\{0,1\}^N}\) on \(A\otimes D\) such that
\begin{equation}
\label{eq:controlled}
\UU
=
\sum_{m\in\{0,1\}^N}
\ket{m}\!\bra{m}_M \otimes V_m,
\end{equation}
where \(V_m\) acts as
\(
V_m\ket{a}_A\ket{y}_D=\ket{a}_A\ket{y\oplus m_a}_D.
\)
\end{lemma}

\begin{proof}
Fix \(m\) and define \(V_m\) by the stated action on the basis \(\ket{a}\ket{y}\).
Then~\eqref{eq:controlled} reproduces~\eqref{eq:uqram} on basis states, and extension by linearity yields the claim.
\end{proof}

\begin{lemma}[Phase-kickback reduction]
\label{lem:phase}
Let \(\ket{-}=(\ket{0}-\ket{1})/\sqrt2\).
For any \(m\), on inputs of the form \(\ket{\psi}_A\ket{-}_D\) the unitary \(V_m\) acts as
\[
V_m(\ket{\psi}_A\ket{-}_D)= (O_m\ket{\psi}_A)\ket{-}_D,
\qquad
O_m\ket{a}=(-1)^{m_a}\ket{a}.
\]
\end{lemma}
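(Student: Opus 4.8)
The plan is to verify the claimed action of $V_m$ directly on the tensor-product input $\ket{\psi}_A\ket{-}_D$ by expanding in the address basis and using the explicit formula $V_m\ket{a}_A\ket{y}_D=\ket{a}_A\ket{y\oplus m_a}_D$ supplied by Lemma~\ref{lem:controlled}. The essential observation is that the bit $m_a$ is a \emph{classical} label determined by the address $a$ and the fixed truth table $m$, so for each fixed $a$ the data register undergoes a deterministic XOR that maps the $\ket{-}$ eigenstate to a signed copy of itself.

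First I would write $\ket{\psi}_A=\sum_a c_a\ket{a}_A$ and, by linearity of $V_m$, reduce to computing $V_m(\ket{a}_A\ket{-}_D)$ for a single address $a$. Expanding $\ket{-}=(\ket{0}-\ket{1})/\sqrt2$ and applying the defining action gives
\[
V_m\big(\ket{a}_A\ket{-}_D\big)
=\tfrac{1}{\sqrt2}\ket{a}_A\big(\ket{0\oplus m_a}-\ket{1\oplus m_a}\big)_D .
\]
The next step is the standard phase-kickback identity: the state $(\ket{0\oplus b}-\ket{1\oplus b})/\sqrt2$ equals $(-1)^{b}\ket{-}$ for $b\in\{0,1\}$, which one checks by inspecting the two cases $b=0$ (no swap) and $b=1$ (the two branches swap, flipping the overall sign). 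Applying this with $b=m_a$ yields $V_m(\ket{a}_A\ket{-}_D)=(-1)^{m_a}\ket{a}_A\ket{-}_D$, so the data register factors out as $\ket{-}_D$ and the address acquires exactly the sign $(-1)^{m_a}$ that defines $O_m$.

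Finally I would reassemble the superposition: summing $c_a(-1)^{m_a}\ket{a}_A\ket{-}_D$ over $a$ gives $(O_m\ket{\psi}_A)\ket{-}_D$ with $O_m\ket{a}=(-1)^{m_a}\ket{a}$, which is the claimed identity. I do not anticipate a genuine obstacle here, since the argument is the classical phase-kickback computation carried out address-by-address; the only point requiring minor care is keeping the roles of the two registers straight, namely recognizing that the $\ket{-}_D$ state is preserved on every branch while the sign is ``kicked back'' onto the address register, so that the resulting operator $O_m$ is diagonal in the address basis.
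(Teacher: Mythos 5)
Your proof is correct and is essentially the paper's own argument carried out more explicitly: the paper simply observes that $\ket{-}$ is the $-1$ eigenstate of $X$, so $\ket{y}\mapsto\ket{y\oplus m_a}$ contributes the phase $(-1)^{m_a}$ while leaving $\ket{-}$ invariant, which is exactly your casewise kickback identity applied address-by-address after expanding by linearity.
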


\begin{proof}
Since \(X\ket{-}=-\ket{-}\), the map \(\ket{y}\mapsto\ket{y\oplus m_a}\) contributes a phase \((-1)^{m_a}\) while leaving \(\ket{-}\) invariant.
\end{proof}

% ============================================================
\section{Restricted-access inference and Helstrom optimality}
% ============================================================

\subsection{General read-only protocols}

Let \(S\) denote the \emph{accessible system}, consisting of the address and data registers and any additional ancilla/workspace under the experimenter's control:
\[
S := A\otimes D\otimes R,
\]
where \(R\) is arbitrary (finite-dimensional).
The read interaction acts as \(\tilde U:=\UU\otimes \Id_R\) on \(A\otimes D\otimes M\otimes R\).

A general \emph{read-only \(t\)-query protocol} is:
\begin{itemize}[leftmargin=1.5em]
\item Prepare an initial state \(\rho_S\) on \(S\).
\item For \(k=1,\dots,t\): apply \(\tilde U\) (i.e.\ \(\UU\) on \(A\otimes D\otimes M\) and identity on \(R\)), then apply an arbitrary CPTP map \(\mathcal{E}_k\) on \(S\).
This includes intermediate measurements, noise, and adaptivity (by storing classical outcomes in \(R\)).
\item Finally, measure \(S\) with an arbitrary POVM \(\{E_x\}_x\).
\end{itemize}
No operation is applied directly to \(M\), and the memory is never measured.

For a fixed memory state \(\rho_M\), the induced accessible state before the final measurement is
\begin{equation}
\label{eq:induced}
\sigma(\rho_M)
:=
\Tr_M\!\Big[\mathcal{W}\big(\rho_{S}\otimes \rho_M\big)\Big],
\end{equation}
where \(\mathcal{W}\) denotes the overall CPTP evolution on \(S\otimes M\) generated by the interleaving of \(\tilde U\) with \(\{\mathcal{E}_k\}\).

A binary hypothesis test specifies \(\rho_M\in\{\rho_M^{(0)},\rho_M^{(1)}\}\) with priors \(\pi_0,\pi_1\).
The experimenter is then discriminating the induced states
\(
\sigma^{(h)} := \sigma(\rho_M^{(h)}),\ h\in\{0,1\},
\)
using measurements on \(S\) only.

\subsection{Helstrom theory}

\begin{theorem}[Helstrom optimal test~\cite{Helstrom76,Holevo82}]
\label{thm:helstrom}
Let \(\sigma_0,\sigma_1\) be two density operators with priors \(\pi_0,\pi_1\).
Define \(\Delta=\pi_0\sigma_0-\pi_1\sigma_1\).
An optimal two-outcome measurement is given by the projector onto the positive part of \(\Delta\) (decide \(H_0\)) and its complement (decide \(H_1\)).
The optimal success probability is
\[
P_{\mathrm{succ}}^\star = \tfrac12\big(1+\normone{\Delta}\big),
\qquad
\normone{X}:=\Tr\sqrt{X^\dagger X}.
\]
In particular, for equal priors \(\pi_0=\pi_1=\tfrac12\),
\[
P_{\mathrm{succ}}^\star = \tfrac12\Big(1+\tfrac12\normone{\sigma_0-\sigma_1}\Big).
\]
\end{theorem}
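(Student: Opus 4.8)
The plan is to reduce the optimization over all two-outcome measurements to a single variational problem governed by the Hermitian operator \(\Delta\), and then to solve that problem by spectral decomposition. First I would parametrize an arbitrary two-outcome POVM as \(\{E_0,\ \Id-E_0\}\) with \(0\le E_0\le\Id\), declaring \(H_0\) on the first outcome and \(H_1\) on the second. Writing out the success probability \(P_{\mathrm{succ}}=\pi_0\Tr(E_0\sigma_0)+\pi_1\Tr((\Id-E_0)\sigma_1)\) and using \(\Tr(\sigma_1)=1\), a short rearrangement gives
\[
P_{\mathrm{succ}}=\pi_1+\Tr(E_0\Delta),
\]
so that maximizing the success probability is equivalent to maximizing the linear functional \(E_0\mapsto\Tr(E_0\Delta)\) subject to \(0\le E_0\le\Id\).

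The core step is to solve this linear optimization. Since \(\Delta\) is Hermitian, I would invoke its spectral decomposition and split it into orthogonal positive and negative parts, \(\Delta=\Delta_+-\Delta_-\) with \(\Delta_\pm\ge0\) and mutually orthogonal supports. For any feasible \(E_0\) one then has \(\Tr(E_0\Delta)=\Tr(E_0\Delta_+)-\Tr(E_0\Delta_-)\le\Tr(\Delta_+)\), because \(0\le\Tr(E_0\Delta_+)\le\Tr(\Delta_+)\) while \(\Tr(E_0\Delta_-)\ge0\). Equality is attained by taking \(E_0=P_+\), the spectral projector onto the positive eigenspace of \(\Delta\), which is precisely the claimed optimal measurement. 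I expect this bound---establishing both the inequality and its saturation at the positive-part projector---to be the conceptual heart of the argument, though it follows cleanly once the positive/negative splitting is in place.

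It then remains to convert the optimal value \(\pi_1+\Tr(\Delta_+)\) into trace-norm form. I would use the two identities \(\Tr(\Delta_+)+\Tr(\Delta_-)=\normone{\Delta}\) and \(\Tr(\Delta_+)-\Tr(\Delta_-)=\Tr(\Delta)=\pi_0-\pi_1\), the latter because both \(\sigma_0,\sigma_1\) have unit trace. Adding these gives \(\Tr(\Delta_+)=\tfrac12(\normone{\Delta}+\pi_0-\pi_1)\), and substituting yields
\[
P_{\mathrm{succ}}^\star=\pi_1+\tfrac12(\normone{\Delta}+\pi_0-\pi_1)=\tfrac12(1+\normone{\Delta}),
\]
using \(\pi_0+\pi_1=1\). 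Finally, specializing to \(\pi_0=\pi_1=\tfrac12\) gives \(\Delta=\tfrac12(\sigma_0-\sigma_1)\) and hence \(\normone{\Delta}=\tfrac12\normone{\sigma_0-\sigma_1}\), recovering the stated equal-prior formula.
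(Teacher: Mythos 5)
Your proof is correct and complete. The paper does not prove Theorem~\ref{thm:helstrom} at all---it is stated as a classical result with citations to Helstrom and Holevo---and your argument is precisely the standard one from those references: reduce the POVM optimization to maximizing \(\Tr(E_0\Delta)\) over \(0\le E_0\le\Id\), split the Hermitian \(\Delta\) into orthogonally supported positive and negative parts to get the bound \(\Tr(E_0\Delta)\le\Tr(\Delta_+)\) with saturation at the positive spectral projector, and convert to trace-norm form via the two identities \(\Tr(\Delta_+)+\Tr(\Delta_-)=\normone{\Delta}\) and \(\Tr(\Delta_+)-\Tr(\Delta_-)=\pi_0-\pi_1\); every step, including the equal-prior specialization, checks out (and since the paper works with finite-dimensional registers, the spectral decomposition you invoke is unproblematic).
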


% ============================================================
\section{Read-only opacity: coherences between truth tables are invisible}
% ============================================================

\subsection{Dephasing (pinching) in the truth-table basis}

Let
\begin{equation}
\label{eq:dephase}
\mathcal{D}(\rho_M):=\sum_{m\in\{0,1\}^N} \ket{m}\!\bra{m}\,\rho_M\,\ket{m}\!\bra{m}
\end{equation}
denote dephasing in the truth-table basis.
Write \(p(m)=\bra{m}\rho_M\ket{m}\) for the resulting classical distribution over truth tables.

\begin{lemma}[Commutation with dephasing]
\label{lem:commute}
Let \(\mathcal{D}\) be the dephasing map on \(M\) defined in~\eqref{eq:dephase}, and extend it to \(S\otimes M\) as \(\Id_S\otimes \mathcal{D}\).
Then:
\begin{enumerate}[leftmargin=1.5em]
\item Conjugation by \(\tilde U\) commutes with \(\Id_S\otimes\mathcal{D}\), i.e.
\[
(\Id_S\otimes\mathcal{D})\!\big(\tilde U\, X \,\tilde U^\dagger\big)
=
\tilde U\big((\Id_S\otimes\mathcal{D})(X)\big)\tilde U^\dagger
\quad \text{for all operators } X \text{ on } S\otimes M.
\]
\item Any CPTP map \(\mathcal{E}\) acting only on \(S\) commutes with \(\Id_S\otimes\mathcal{D}\), i.e.
\[
(\Id_S\otimes\mathcal{D})\!\big((\mathcal{E}\otimes \Id_M)(X)\big)
=
(\mathcal{E}\otimes \Id_M)\!\big((\Id_S\otimes\mathcal{D})(X)\big).
\]
\end{enumerate}
\end{lemma}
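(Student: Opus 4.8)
The plan is to prove both commutation statements by reducing them to a single structural fact: the dephasing map $\Id_S\otimes\mathcal{D}$ is the projection onto the commutant of the diagonal memory algebra, and both $\tilde U$ and $\mathcal{E}\otimes\Id_M$ preserve that algebra. Concretely, I would first recall the controlled decomposition from Lemma~\ref{lem:controlled}, which gives $\tilde U=\sum_m \ket{m}\!\bra{m}_M\otimes (V_m\otimes\Id_R)$. The key observation is that $\tilde U$ is \emph{block-diagonal} in the truth-table basis on $M$: it never moves weight between distinct $\ket{m}$ sectors. This is exactly what makes conjugation by $\tilde U$ commute with truth-table dephasing.

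For part (1), I would compute both sides explicitly on an arbitrary $X$. Writing $\tilde U = \sum_m P_m\otimes W_m$ with $P_m=\ket{m}\!\bra{m}_M$ and $W_m=V_m\otimes\Id_R$ (acting on $S$), conjugation gives $\tilde U X \tilde U^\dagger = \sum_{m,m'} (P_m\otimes W_m)\,X\,(P_{m'}\otimes W_{m'}^\dagger)$. Applying $\Id_S\otimes\mathcal{D}$ projects onto the block-diagonal part in $M$; since $P_m X' P_{m'}$ is sandwiched between the projectors $P_m$ on the left and $P_{m'}$ on the right, dephasing kills every off-diagonal ($m\neq m'$) term on both sides of the equation identically. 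The surviving diagonal terms $\sum_m (P_m\otimes W_m)(P_m X P_m)(P_m\otimes W_m^\dagger)$ match whether one dephases before or after conjugation, because $\mathcal{D}$ acts as the identity on each already-diagonal block and commutes with the purely-$S$ factor $W_m$. The cleanest route is to note $\mathcal{D}(P_m Y P_{m'}) = \delta_{m,m'}P_m Y P_m$ for the $M$-part, so dephasing and the block structure interact through the same Kronecker delta regardless of order.

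Part (2) is the easier half and I would dispatch it by support: $\mathcal{E}\otimes\Id_M$ acts trivially on the $M$ tensor factor, whereas $\Id_S\otimes\mathcal{D}$ acts trivially on the $S$ factor, so the two maps operate on disjoint tensor legs and commute automatically. Formally, writing $\mathcal{E}$ in Kraus form $\mathcal{E}(\cdot)=\sum_i K_i(\cdot)K_i^\dagger$ with each $K_i$ supported on $S$, every $K_i\otimes\Id_M$ commutes with each $\ket{m}\!\bra{m}_M$ appearing in $\mathcal{D}$, and one can interchange the two sums term by term.

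I do not expect a genuine obstacle here; the result is essentially the statement that the map ``restrict to the diagonal block-algebra of $M$'' is invariant under the symmetry group generated by the controlled read and by $S$-local channels. The only point requiring care is bookkeeping in part (1): one must keep the double sum over $(m,m')$ and verify that dephasing annihilates exactly the $m\neq m'$ cross terms on \emph{both} sides, rather than appealing vaguely to ``block-diagonality.'' I would therefore present the $\delta_{m,m'}$ cancellation explicitly but without grinding through the index algebra, since once the off-diagonal terms vanish the two orderings visibly agree.
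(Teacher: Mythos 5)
Your proof is correct and takes essentially the same route as the paper: both rest on the controlled decomposition of Lemma~\ref{lem:controlled}, the resulting block-diagonality of \(\tilde U\) in the truth-table basis, and the disjoint-tensor-legs observation for part (2). Your explicit \(\delta_{m,m'}\) double-sum bookkeeping is simply a spelled-out version of the paper's more compact remark that \(\tilde U\) commutes with each projector \(\Id_S\otimes\ket{m}\!\bra{m}_M\) and hence with the pinching.
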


\begin{proof}
For (1), use Lemma~\ref{lem:controlled}: \(\UU=\sum_m \ket{m}\!\bra{m}\otimes V_m\), hence \(\UU\) is block-diagonal in the truth-table basis on \(M\), and so is \(\tilde U=\UU\otimes \Id_R\).
In particular, \(\tilde U\) commutes with each projector \(\Pi_m:=\ket{m}\!\bra{m}\) on \(M\), i.e.\ \((\Id_S\otimes \Pi_m)\tilde U=\tilde U(\Id_S\otimes \Pi_m)\).
Since the pinching map is \(\mathcal{D}(X)=\sum_m \Pi_m X \Pi_m\), it follows that \(\Id_S\otimes \mathcal{D}\) commutes with conjugation by \(\tilde U\).

For (2), \(\mathcal{E}\otimes \Id_M\) acts trivially on \(M\), so it commutes with any map acting only on \(M\), in particular \(\Id_S\otimes\mathcal{D}\).
\end{proof}

\subsection{Opacity theorem}

\begin{theorem}[Read-only opacity / dephasing reduction]
\label{thm:opacity}
Fix any read-only \(t\)-query protocol in the sense of Section~3.1 (allowing arbitrary accessible ancillas, intermediate measurements and adaptivity, and general CPTP processing between queries).
Then the induced accessible state depends on \(\rho_M\) only through its diagonal in the truth-table basis:
\[
\sigma(\rho_M)=\sigma(\mathcal{D}(\rho_M)).
\]
Equivalently, there exist states \(\{\sigma_m\}_{m\in\{0,1\}^N}\) on the accessible system \(S\), determined by the protocol (but not by \(\rho_M\)), such that
\begin{equation}
\label{eq:mixture}
\sigma(\rho_M)=\sum_{m\in\{0,1\}^N} p(m)\,\sigma_m,
\end{equation}
where \(\sigma_m\) is exactly the induced output obtained if the memory were initialized to \(\ket{m}\bra{m}\).
Consequently, no read-only protocol can distinguish two memory states that share the same diagonal \(p(m)\) in the truth-table basis.
\end{theorem}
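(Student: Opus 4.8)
The plan is to exploit two facts in tandem: that the full protocol evolution $\mathcal{W}$ commutes with dephasing on the memory register, and that tracing out $M$ is insensitive to a prior dephasing of $M$. Together these let me insert a dephasing immediately before the final partial trace, slide it backwards through $\mathcal{W}$ until it acts on the input memory state, and thereby replace $\rho_M$ by $\mathcal{D}(\rho_M)$ at no cost.

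First I would establish that $\mathcal{W}$ commutes with $\Id_S\otimes\mathcal{D}$ as superoperators. By construction $\mathcal{W}$ is the interleaved composition, over the $t$ queries, of the conjugation map $X\mapsto\tilde U X\tilde U^\dagger$ and the maps $\mathcal{E}_k\otimes\Id_M$. Lemma~\ref{lem:commute}(1) gives that conjugation by $\tilde U$ commutes with $\Id_S\otimes\mathcal{D}$, while Lemma~\ref{lem:commute}(2) gives the same for each $\mathcal{E}_k\otimes\Id_M$. Since a composition of superoperators that each commute with a fixed superoperator again commutes with it, it follows immediately that $(\Id_S\otimes\mathcal{D})\circ\mathcal{W}=\mathcal{W}\circ(\Id_S\otimes\mathcal{D})$.

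Next I would record the partial-trace identity $\Tr_M\circ(\Id_S\otimes\mathcal{D})=\Tr_M$. Writing $\mathcal{D}(\cdot)=\sum_m\Pi_m(\cdot)\Pi_m$ with $\Pi_m=\ket{m}\!\bra{m}$, this follows from cyclicity of the partial trace restricted to the $M$-factor (each $\Id_S\otimes\Pi_m$ acts trivially on $S$), together with $\Pi_m^2=\Pi_m$ and $\sum_m\Pi_m=\Id_M$. With both ingredients in hand, the main computation is a short chain starting from~\eqref{eq:induced}:
\begin{align*}
\sigma(\rho_M)
&=\Tr_M\big[\mathcal{W}(\rho_S\otimes\rho_M)\big]
=\Tr_M\big[(\Id_S\otimes\mathcal{D})\,\mathcal{W}(\rho_S\otimes\rho_M)\big]\\
&=\Tr_M\big[\mathcal{W}\,(\Id_S\otimes\mathcal{D})(\rho_S\otimes\rho_M)\big]
=\sigma(\mathcal{D}(\rho_M)),
\end{align*}
where the second equality is the partial-trace identity, the third is the commutation relation, and the last uses $(\Id_S\otimes\mathcal{D})(\rho_S\otimes\rho_M)=\rho_S\otimes\mathcal{D}(\rho_M)$. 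The mixture form~\eqref{eq:mixture} then follows by expanding $\mathcal{D}(\rho_M)=\sum_m p(m)\,\ket{m}\!\bra{m}$ and invoking linearity of $\mathcal{W}$ and $\Tr_M$, with $\sigma_m:=\Tr_M[\mathcal{W}(\rho_S\otimes\ket{m}\!\bra{m})]$; the final indistinguishability claim is immediate, since two memory states sharing the diagonal $p(m)$ have the same $\mathcal{D}(\rho_M)$ and hence the same induced $\sigma$.

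I do not expect a genuine obstacle here: once Lemma~\ref{lem:commute} is in place, the argument is essentially bookkeeping. The one point that deserves care is the propagation of the commutation relation through the entire interleaved composition $\mathcal{W}$---making explicit that commutation with $\Id_S\otimes\mathcal{D}$ is preserved under composition---together with the verification that the final partial trace genuinely absorbs the inserted dephasing via cyclicity restricted to the $M$-factor. Both are routine, but stating them cleanly is what makes the protocol-independence of the conclusion transparent.
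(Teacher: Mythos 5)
Your proposal is correct and follows essentially the same route as the paper's own proof: commuting \(\Id_S\otimes\mathcal{D}\) through the composed evolution \(\mathcal{W}\) via Lemma~\ref{lem:commute}, absorbing the dephasing into \(\Tr_M\), and then expanding \(\mathcal{D}(\rho_M)\) by linearity to get the mixture form. Your extra justification of \(\Tr_M\circ(\Id_S\otimes\mathcal{D})=\Tr_M\) via cyclicity on the \(M\)-factor is a fine (and valid) elaboration of a step the paper states without proof.
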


\begin{proof}
Let \(\mathcal{W}\) denote the overall CPTP map on \(S\otimes M\) induced by the protocol.
By Lemma~\ref{lem:commute}, each use of \(\tilde U\) and each interleaving CPTP map on \(S\) commutes with \(\Id_S\otimes\mathcal{D}\); hence so does their composition \(\mathcal{W}\):
\[
(\Id_S\otimes\mathcal{D})\circ \mathcal{W}
=
\mathcal{W}\circ(\Id_S\otimes\mathcal{D}).
\]
Now use the fact that tracing out \(M\) is invariant under dephasing on \(M\), i.e.
\(\Tr_M=\Tr_M\circ(\Id_S\otimes\mathcal{D})\).
Then
\begin{align*}
\sigma(\rho_M)
&= \Tr_M\!\big[\mathcal{W}(\rho_S\otimes\rho_M)\big] \\
&= \Tr_M\!\Big[(\Id_S\otimes\mathcal{D})\big(\mathcal{W}(\rho_S\otimes\rho_M)\big)\Big] \\
&= \Tr_M\!\Big[\mathcal{W}\big((\Id_S\otimes\mathcal{D})(\rho_S\otimes\rho_M)\big)\Big] \\
&= \Tr_M\!\big[\mathcal{W}(\rho_S\otimes\mathcal{D}(\rho_M))\big]
= \sigma(\mathcal{D}(\rho_M)).
\end{align*}
For the mixture form, write \(\mathcal{D}(\rho_M)=\sum_m p(m)\ket{m}\!\bra{m}\) and use linearity:
\[
\sigma(\rho_M)=\sum_m p(m)\,\Tr_M\!\big[\mathcal{W}(\rho_S\otimes\ket{m}\!\bra{m})\big].
\]
Define \(\sigma_m:=\Tr_M[\mathcal{W}(\rho_S\otimes\ket{m}\!\bra{m})]\) to obtain~\eqref{eq:mixture}.
\end{proof}

\begin{remark}[Interpretation]
Theorem~\ref{thm:opacity} does \emph{not} assert that the memory is classical.
It asserts that \emph{read-only access} factors through dephasing in the truth-table basis: coherences between distinct truth tables cannot influence any statistics obtainable without acting on or measuring \(M\).
\end{remark}

\begin{corollary}[Measure-and-prepare form; entanglement-breaking]
\label{cor:eb}
For any fixed read-only protocol, the induced map \(\Phi:\rho_M\mapsto \sigma(\rho_M)\) is a classical-to-quantum (measure-and-prepare) channel:
\[
\Phi(\rho_M)=\sum_m \Tr(\ket{m}\!\bra{m}\rho_M)\,\sigma_m.
\]
In particular, \(\Phi\) is entanglement-breaking (equivalently, measure-and-prepare)~\cite{HSR03}: for any joint state \(\rho_{MQ}\) of \(M\) with an arbitrary reference system \(Q\), the state \((\Phi\otimes \Id_Q)(\rho_{MQ})\) is separable across \(S{:}Q\).
\end{corollary}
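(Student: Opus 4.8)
The plan is to obtain both assertions essentially for free from the mixture representation already proved in Theorem~\ref{thm:opacity}. That theorem supplies fixed accessible states \(\{\sigma_m\}\) (depending on the protocol but not on \(\rho_M\)) with
\[
\Phi(\rho_M)=\sigma(\rho_M)=\sum_m p(m)\,\sigma_m,
\qquad
p(m)=\bra{m}\rho_M\ket{m}=\Tr\big(\ket{m}\!\bra{m}\,\rho_M\big).
\]
Reading \(p(m)=\Tr(\ket{m}\!\bra{m}\rho_M)\) as the probability of outcome \(m\) in the projective measurement \(\{\ket{m}\!\bra{m}\}_m\) on \(M\), this is exactly the claimed classical-to-quantum form: measure \(M\) in the truth-table basis, and on outcome \(m\) prepare the fixed state \(\sigma_m\) on \(S\). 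So the first display of the corollary is just a restatement of~\eqref{eq:mixture}.

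For the entanglement-breaking claim I would first note that \(\Phi\) is CPTP---it is the composition of tensoring with the fixed \(\rho_S\), the protocol map \(\mathcal{W}\), and \(\Tr_M\)---so \(\Phi\otimes\Id_Q\) is a well-defined channel and it suffices to evaluate it on an arbitrary joint state \(\rho_{MQ}\). The key step is to extend the measure-and-prepare formula \(\Phi(\cdot)=\sum_m\sigma_m\,\bra{m}(\cdot)\ket{m}\) linearly to all operators and apply it to matrix units: since \(\Phi(\ket{m}\!\bra{m'})=\sum_k\langle k|m\rangle\langle m'|k\rangle\,\sigma_k=\delta_{mm'}\,\sigma_m\), every off-diagonal (coherence) contribution is annihilated. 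Expanding \(\rho_{MQ}=\sum_{m,m'}\ket{m}\!\bra{m'}_M\otimes B_{mm'}\) with \(B_{mm'}=\bra{m}\rho_{MQ}\ket{m'}\) acting on \(Q\), linearity then gives
\[
(\Phi\otimes\Id_Q)(\rho_{MQ})=\sum_{m,m'}\Phi(\ket{m}\!\bra{m'})\otimes B_{mm'}=\sum_m\sigma_m\otimes\tau_m,
\qquad
\tau_m:=\bra{m}\rho_{MQ}\ket{m}\ge 0.
\]
Each \(\sigma_m\otimes\tau_m\) is a positive product operator on \(S\otimes Q\) and \(\sum_m\Tr(\tau_m)=1\), so the output is a convex combination of product states, hence separable across \(S{:}Q\); this is precisely the assertion that \(\Phi\) is entanglement-breaking, matching the general measure-and-prepare characterization of~\cite{HSR03}.

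I do not expect a genuine obstacle, since Theorem~\ref{thm:opacity} already did the work of discarding the truth-table coherences. The only point requiring care is the legitimacy of computing \(\Phi\otimes\Id_Q\) via the linear extension of the measure-and-prepare formula: this is justified because \(\Phi\) is completely positive, so its tensor extension with \(\Id_Q\) is the unique physical channel and coincides with the factorwise-defined linear map on \(S\otimes Q\). Equivalently, once the first display is in hand one may simply invoke the known equivalence between measure-and-prepare and entanglement-breaking channels~\cite{HSR03} as a black box, but the short direct computation above has the advantage of exhibiting the separable decomposition explicitly.
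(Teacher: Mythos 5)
Your proposal is correct. The paper itself offers no separate proof of Corollary~\ref{cor:eb}: it treats the classical-to-quantum form as an immediate restatement of the mixture representation~\eqref{eq:mixture} from Theorem~\ref{thm:opacity}, and then obtains entanglement-breaking by invoking the measure-and-prepare/entanglement-breaking equivalence of~\cite{HSR03} as a black box. You handle the first half identically, but for the second half you replace the citation with a direct computation: extending \(\Phi\) linearly, checking \(\Phi(\ket{m}\!\bra{m'})=\delta_{mm'}\sigma_m\) on matrix units, and expanding \(\rho_{MQ}=\sum_{m,m'}\ket{m}\!\bra{m'}\otimes B_{mm'}\) to exhibit the explicit separable decomposition \(\sum_m \sigma_m\otimes\tau_m\) with \(\tau_m=\bra{m}\rho_{MQ}\ket{m}\ge 0\). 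This buys a self-contained argument that also reproves, as a byproduct, the paper's subsequent remark on memories entangled with a reference (only the diagonal blocks \(\tau_m\) survive), at the cost of a few extra lines. One small conceptual correction: the legitimacy of the factorwise computation does not rest on complete positivity, as you suggest, but on plain linearity---\(\Phi\otimes\Id_Q\) is by definition the linear tensor extension, and the measure-and-prepare formula agrees with the physical map \(\rho_M\mapsto\Tr_M[\mathcal{W}(\rho_S\otimes\rho_M)]\) on all operators because density operators span the operator space, so the two linear maps coincide. Complete positivity is only needed (and is evident from your explicit formula) to guarantee the output is a state; this is a misattribution of the justification, not a gap.
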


\begin{remark}[Extension to memories entangled with a reference]
The dephasing reduction holds at the level of joint outputs.
If the initial memory is part of an arbitrary joint state \(\rho_{MQ}\), then the induced state on \(S\otimes Q\) satisfies
\[
(\Phi\otimes \Id_Q)(\rho_{MQ})
=
(\Phi\otimes \Id_Q)\big((\mathcal{D}\otimes \Id_Q)(\rho_{MQ})\big),
\]
i.e.\ only the truth-table diagonal blocks \(\{(\Pi_m\otimes\Id_Q)\rho_{MQ}(\Pi_m\otimes\Id_Q)\}_m\) can influence any read-only observable on \(S\) (even in the presence of side information \(Q\)).
\end{remark}

\begin{remark}[Reduction to a classical experiment; a universal bound]
Fix a read-only protocol, and let \(p_h(m)=\bra{m}\rho_M^{(h)}\ket{m}\) be the truth-table distributions under hypotheses \(H_h\).
By Theorem~\ref{thm:opacity}, the induced states satisfy
\(
\sigma^{(h)}=\sum_m p_h(m)\,\sigma_m
\)
for a protocol-dependent family \(\{\sigma_m\}_m\).
Thus, for a fixed protocol, memory discrimination is equivalent to distinguishing two classical distributions \(p_0,p_1\) passed through the classical-to-quantum channel \(m\mapsto\sigma_m\).

Define the total-variation distance
\[
\mathrm{TV}(p_0,p_1):=\frac12\sum_m |p_0(m)-p_1(m)|=\frac12\|p_0-p_1\|_1.
\]
Then, for any such protocol,
\[
\frac12\big\lVert \sigma^{(0)}-\sigma^{(1)}\big\rVert_1
=
\frac12\Big\lVert \sum_m \big(p_0(m)-p_1(m)\big)\sigma_m\Big\rVert_1
\le \frac12\sum_m \big|p_0(m)-p_1(m)\big|\cdot \|\sigma_m\|_1
= \mathrm{TV}(p_0,p_1),
\]
where \(\|\sigma_m\|_1=\Tr(\sigma_m)=1\) for each density operator \(\sigma_m\).
In particular, for equal priors the optimal success probability obeys
\(
P_{\mathrm{succ}}^\star \le \tfrac12\big(1+\mathrm{TV}(p_0,p_1)\big).
\)
\end{remark}

\begin{proposition}[A diagonal-world quantitative bound: tightness and an output-dimension obstruction]
\label{prop:quant}
Fix a read-only protocol, with induced cq-channel \(m\mapsto \sigma_m\) as in~\eqref{eq:mixture}, and let \(\sigma^{(h)}=\sum_m p_h(m)\sigma_m\) be the induced states under two memory-diagonal hypotheses \(p_0,p_1\).

\begin{enumerate}[leftmargin=1.5em]
\item (Tightness of the TV bound.) Let \(\delta(m):=p_0(m)-p_1(m)\) and assume \(\mathrm{TV}(p_0,p_1)>0\).
Define
\[
\alpha:=\mathrm{TV}(p_0,p_1)=\sum_{\delta(m)>0}\delta(m),
\quad
q_+(m):=\frac{\max\{\delta(m),0\}}{\alpha},
\quad
q_-(m):=\frac{\max\{-\delta(m),0\}}{\alpha},
\]
and the corresponding states
\[
\tau_+ := \sum_m q_+(m)\,\sigma_m,
\qquad
\tau_- := \sum_m q_-(m)\,\sigma_m.
\]
Then
\[
\frac12\|\sigma^{(0)}-\sigma^{(1)}\|_1
=
\alpha\cdot \frac12\|\tau_+-\tau_-\|_1
\le \alpha=\mathrm{TV}(p_0,p_1).
\]
Moreover, equality (i.e.\ saturation of the TV bound) holds if and only if \(\tau_+\) and \(\tau_-\) are perfectly distinguishable, equivalently their supports are orthogonal.

\item (Output-dimension obstruction for multi-hypothesis identification.) Let the memory be promised to be one of \(K\) deterministic truth tables \(\{m_1,\dots,m_K\}\), each with equal prior, and let \(d:=\dim(S)\).
Then for any read-only protocol and any measurement on \(S\), the optimal success probability for identifying \(m_i\) from the induced states \(\{\sigma_{m_i}\}_{i=1}^K\) satisfies
\[
P^\star_{\mathrm{succ}} \le \frac{d}{K}.
\]
In particular, perfect identification (\(P^\star_{\mathrm{succ}}=1\)) requires \(d\ge K\).
\end{enumerate}
\end{proposition}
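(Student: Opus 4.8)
The plan is to prove the two parts independently; each collapses to a short argument once the mixture representation~\eqref{eq:mixture} from Theorem~\ref{thm:opacity} is in hand, so the real work is bookkeeping plus one standard operator-theoretic fact per part.

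For part~(1), I would first record the normalization that makes \(\tau_+,\tau_-\) legitimate states. Since \(p_0,p_1\) are probability distributions, \(\sum_m\delta(m)=0\), so \(\sum_{\delta(m)>0}\delta(m)=\sum_{\delta(m)<0}(-\delta(m))=\alpha\); hence \(q_+\) and \(q_-\) each sum to \(1\), and \(\tau_\pm=\sum_m q_\pm(m)\sigma_m\) are density operators as convex mixtures of the \(\sigma_m\). Substituting \(\sigma^{(h)}=\sum_m p_h(m)\sigma_m\) and splitting the difference by the sign of \(\delta\) gives the key identity
\[
\sigma^{(0)}-\sigma^{(1)}=\sum_m\delta(m)\,\sigma_m=\alpha\,\tau_+-\alpha\,\tau_-=\alpha(\tau_+-\tau_-).
\]
Taking \(\normone{\cdot}\) and pulling out the nonnegative scalar \(\alpha\) yields \(\tfrac12\normone{\sigma^{(0)}-\sigma^{(1)}}=\alpha\cdot\tfrac12\normone{\tau_+-\tau_-}\), and the inequality follows because the trace distance between two density operators never exceeds \(1\). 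For the equality clause I would invoke the standard characterization (equivalently Theorem~\ref{thm:helstrom} at \(P^\star_{\mathrm{succ}}=1\)): \(\tfrac12\normone{\tau_+-\tau_-}=1\) exactly when \(\tau_+\) and \(\tau_-\) are perfectly distinguishable, i.e.\ have orthogonal supports.

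For part~(2) I would argue directly at the level of an arbitrary measurement. After coarse-graining outcomes into the \(K\) possible decisions, a decision rule is a POVM \(\{E_i\}_{i=1}^K\) on \(S\) with \(\sum_{i=1}^K E_i=\Id_S\), and for equal priors \(P_{\mathrm{succ}}=\tfrac1K\sum_{i=1}^K\Tr(E_i\sigma_{m_i})\). The one inequality doing the work is \(\Tr(E_i\sigma_{m_i})\le\Tr(E_i)\), valid since \(E_i\ge0\) and \(\sigma_{m_i}\le\Id\): indeed \(\Tr(E_i\sigma_{m_i})=\Tr\!\big(E_i^{1/2}\sigma_{m_i}E_i^{1/2}\big)\le\Tr\!\big(E_i^{1/2}\,\Id\,E_i^{1/2}\big)=\Tr(E_i)\). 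Summing over \(i\) and using completeness gives \(\sum_i\Tr(E_i\sigma_{m_i})\le\sum_i\Tr(E_i)=\Tr(\Id_S)=d\), hence \(P^\star_{\mathrm{succ}}\le d/K\); the statement about perfect identification is the case \(P^\star_{\mathrm{succ}}=1\).

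Neither part is genuinely difficult, but it is worth flagging where the content sits. In part~(1) it lies entirely in the observation \(\sum_m\delta(m)=0\), which forces \(\tau_\pm\) to normalize and thereby converts the \(\mathrm{TV}\) inequality into a statement about the trace distance of two honest states. In part~(2) the crux is that the output dimension enters \emph{only} through \(\Tr(\Id_S)=d\) after the completeness sum, making the bound simultaneously protocol-independent and state-independent; the one point requiring care in the write-up is the coarse-graining reduction that legitimizes \(\sum_i E_i=\Id_S\) with exactly \(K\) outcomes.
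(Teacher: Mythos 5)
Your proposal is correct and takes essentially the same route as the paper's proof in Appendix~\ref{app:quant}: the identical decomposition \(\sigma^{(0)}-\sigma^{(1)}=\alpha(\tau_+-\tau_-)\) with the trace-distance bound and orthogonal-support equality condition for part~(1), and the identical POVM estimate \(\Tr(E_i\sigma_{m_i})\le\Tr(E_i)\) summed via completeness to \(\Tr(\Id_S)=d\) for part~(2). Your additions---the explicit check that \(\sum_m\delta(m)=0\) normalizes \(q_\pm\), and the \(E_i^{1/2}\sigma_{m_i}E_i^{1/2}\le E_i\) justification---merely spell out steps the paper leaves implicit.
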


\begin{remark}[Operational consequence]
Corollary~\ref{cor:eb} formalizes a strong ``no-coherence leakage'' statement: regardless of probe design, adaptivity, or number of read queries, read-only interaction cannot transmit coherence between distinct truth tables into accessible degrees of freedom.
Proposition~\ref{prop:quant} further shows that (i) the universal TV bound is saturated exactly when the protocol collapses the task to a \emph{perfect} binary test between two diagonal mixtures, and (ii) finite accessible output dimension imposes a sharp obstruction to identifying large families of candidate tables.
\end{remark}

% ============================================================
\section{Worked examples}
% ============================================================

For concreteness, Examples~\ref{ex:phasekick}--\ref{ex:phaseblind} use the smallest nontrivial random-access setting \(n=1\) (so \(N=2\) addresses and \(M\) consists of two qubits).
In these examples we take the ancilla \(R\) to be trivial, so \(\tilde U=\UU\).
This keeps formulas explicit, while Theorem~\ref{thm:opacity} holds for arbitrary \(n\), arbitrary ancillas, and arbitrary read-only protocols.

\subsection{Example I: phase-kickback and one-query identification geometry}

\begin{example}[Two truth tables via phase kickback]
\label{ex:phasekick}
Consider two deterministic memory tables \(m\in\{00,01\}\).
Prepare the probe \(\ket{+}_A\ket{-}_D\), apply \(\UU\) once, and then measure \(A\) in the \(X\)-basis.
By Lemma~\ref{lem:phase}, the data register factors out and the address acquires phases \((-1)^{m_a}\):
\[
m=00 \ \Rightarrow\  O_m\ket{+}=\ket{+},
\qquad
m=01 \ \Rightarrow\  O_m\ket{+}=\ket{-}.
\]
Thus the two hypotheses are perfectly distinguishable with one query.
This is the familiar phase-oracle geometry used, in particular, in Bernstein--Vazirani and Deutsch--Jozsa type settings~\cite{BV97,DJ92}.
\end{example}

\subsection{Example II: a minimal Helstrom instance with \(P^\star_{\rm succ}=3/4\)}

\begin{example}[Overlapping priors and the Helstrom optimum]
\label{ex:helstrom34}
Let Nature choose a basis table according to one of two \emph{memory-diagonal} ensembles:
\[
H_0:\ m\in\{00,11\}\ \text{uniformly},
\qquad
H_1:\ m\in\{00,01\}\ \text{uniformly},
\]
with equal hypothesis priors \(\pi_0=\pi_1=\tfrac12\).
Use the same one-query protocol as in Example~\ref{ex:phasekick}.

Under \(H_0\), the induced state on \(A\) is \(\ket{+}\bra{+}\): both \(m=00\) and \(m=11\) act as the same phase oracle on \(A\) up to a global phase.
Under \(H_1\), the induced state on \(A\) is the uniform mixture of \(\ket{+}\) and \(\ket{-}\), i.e.\ \(\Id_A/2\).
Therefore the induced accessible states are
\[
\sigma^{(0)}=\ket{+}\!\bra{+}_A\otimes \ket{-}\!\bra{-}_D,
\qquad
\sigma^{(1)}=\tfrac{\Id_A}{2}\otimes \ket{-}\!\bra{-}_D.
\]
Since the \(D\) factor is identical, the trace distance is determined by the \(A\) marginal.
On \(A\), \(\ket{+}\!\bra{+}-\Id/2\) has eigenvalues \(\pm\tfrac12\), hence trace norm \(1\).
Thus \(\normone{\sigma^{(0)}-\sigma^{(1)}}=1\), and Theorem~\ref{thm:helstrom} yields
\[
P_{\mathrm{succ}}^\star=\tfrac12\Big(1+\tfrac12\cdot 1\Big)=\tfrac34.
\]
An optimal decision rule is: measure \(A\) in the \(X\)-basis and decide \(H_0\) iff the outcome is \(+\).
\end{example}

\subsection{Example III: indistinguishability of relative phases in entangled memories}

\begin{example}[A phase that read-only access cannot detect]
\label{ex:phaseblind}
Assume the memory is promised to be in one of two entangled states
\[
H_0:\ket{\Phi^+}_M=\tfrac{1}{\sqrt2}(\ket{00}+\ket{11}),
\qquad
H_1:\ket{\Phi^-}_M=\tfrac{1}{\sqrt2}(\ket{00}-\ket{11}),
\]
with equal priors.
These states have identical diagonals in the truth-table basis: probability \(1/2\) on \(00\) and probability \(1/2\) on \(11\).
By Theorem~\ref{thm:opacity}, \emph{every} read-only protocol induces the same accessible output state under \(H_0\) and \(H_1\).
Therefore
\[
\sigma^{(0)}=\sigma^{(1)},
\qquad
P_{\mathrm{succ}}^\star=\tfrac12.
\]
\end{example}

Example~\ref{ex:phaseblind} isolates the operational boundary captured by Theorem~\ref{thm:opacity}: coherent addressing can generate nontrivial quantum states on accessible registers, but relative phases \emph{between} distinct truth tables stored in memory are not observable under read-only access.

% ============================================================
\section{Discussion and outlook}
% ============================================================

U-QRAM provides a clean operational setting where the interface is fixed and the uncertainty resides in a persistent (possibly quantum) memory.
In this setting, hypotheses about the memory induce hypotheses about accessible output states, making minimum-error inference a direct application of standard state-discrimination theory.

The main structural result, Theorem~\ref{thm:opacity}, shows that read-only access imposes a protocol-independent constraint: for any finite-query strategy (including ancillas, adaptivity, and arbitrary CPTP processing), the induced accessible state depends on \(\rho_M\) only through its diagonal in the truth-table basis.
Equivalently, the induced map from memory to accessible outputs is a measure-and-prepare channel (Corollary~\ref{cor:eb}), so it cannot transmit coherence or entanglement between distinct truth tables into accessible degrees of freedom.
Beyond the reduction itself, Proposition~\ref{prop:quant} provides a quantitative perspective: the universal TV bound is saturated exactly in the presence of a perfect binary separation induced by the protocol, while finite output dimension obstructs identification of large candidate families of truth tables.

Two directions appear particularly natural.
First, for restricted hypothesis families (e.g.\ memory states supported on a structured set of truth tables), one can study optimal query complexity and optimal probe/measurement design by maximizing the Helstrom distinguishability of the induced states.
In some special cases, the resulting induced-state optimization can be carried out in fully explicit circuit form; see, e.g.,~\cite{BohacClosedForm25} for a closed-form single-query identification instance.
Second, one may formalize \emph{extended} memory interfaces that go beyond read-only semantics---for example, by allowing controlled basis-mixing on \(M\) or limited write operations---and quantify which additional features of \(\rho_M\) (such as coherences or entanglement witnesses) become observable.

% ============================================================
% Appendix
% ============================================================
\appendix

\section{Proof of Proposition~\ref{prop:quant}}
\label{app:quant}

\paragraph{Part (1): tightness of the TV bound.}
Write \(\delta(m)=p_0(m)-p_1(m)\) and \(\alpha=\mathrm{TV}(p_0,p_1)=\sum_{\delta>0}\delta(m)\).
By construction, \(q_+,q_-\) are distributions supported on \(\{\delta>0\}\) and \(\{\delta<0\}\), and
\[
\delta(m)=\alpha\,q_+(m)-\alpha\,q_-(m).
\]
Therefore
\[
\sigma^{(0)}-\sigma^{(1)}
=\sum_m \delta(m)\sigma_m
=\alpha\Big(\sum_m q_+(m)\sigma_m - \sum_m q_-(m)\sigma_m\Big)
=\alpha(\tau_+-\tau_-),
\]
so
\[
\frac12\|\sigma^{(0)}-\sigma^{(1)}\|_1
=\alpha\cdot \frac12\|\tau_+-\tau_-\|_1
\le \alpha,
\]
since \(\tfrac12\|\tau_+-\tau_-\|_1\le 1\) for any pair of states.
Moreover, \(\tfrac12\|\tau_+-\tau_-\|_1=1\) holds if and only if \(\tau_+\) and \(\tau_-\) are perfectly distinguishable, equivalently their supports are orthogonal.

\paragraph{Part (2): output-dimension obstruction.}
Let \(\{\sigma_{m_i}\}_{i=1}^K\) be the induced states on a \(d\)-dimensional system \(S\), with equal priors.
For any POVM \(\{E_i\}_{i=1}^K\) on \(S\),
\[
P_{\mathrm{succ}}
=\frac{1}{K}\sum_{i=1}^K \Tr(E_i\sigma_{m_i})
\le \frac{1}{K}\sum_{i=1}^K \Tr(E_i),
\]
since \(0\le \sigma_{m_i}\le \Id_S\) implies \(\Tr(E_i\sigma_{m_i})\le \Tr(E_i)\).
Using \(\sum_{i=1}^K E_i=\Id_S\), we get \(\sum_i \Tr(E_i)=\Tr(\Id_S)=d\), hence
\[
P_{\mathrm{succ}} \le \frac{d}{K}.
\]
Maximizing over POVMs yields \(P^\star_{\mathrm{succ}}\le d/K\), and perfect identification requires \(d\ge K\).

% ============================================================
% References
% ============================================================


\begin{thebibliography}{99}

\bibitem{BohacUqram25}
L.~Bohac,
``Universal Quantum Random Access Memory: A Data-Independent Unitary Construction,''
arXiv:2512.12999 (2025).

\bibitem{Helstrom76}
C.~W.~Helstrom,
\emph{Quantum Detection and Estimation Theory},
Academic Press (1976).

\bibitem{Holevo82}
A.~S.~Holevo,
\emph{Probabilistic and Statistical Aspects of Quantum Theory},
North-Holland (1982).

\bibitem{Barnett09}
S.~M.~Barnett and S.~Croke,
``Quantum state discrimination,''
\emph{Adv.\ Opt.\ Photon.} \textbf{1}, 238--278 (2009).

\bibitem{DJ92}
D.~Deutsch and R.~Jozsa,
``Rapid solution of problems by quantum computation,''
\emph{Proc.\ R.\ Soc.\ Lond.\ A} \textbf{439}, 553--558 (1992).

\bibitem{BV97}
E.~Bernstein and U.~Vazirani,
``Quantum complexity theory,''
\emph{SIAM J.\ Comput.} \textbf{26}(5), 1411--1473 (1997).

\bibitem{GLM08}
V.~Giovannetti, S.~Lloyd, and L.~Maccone,
``Quantum random access memory,''
\emph{Phys.\ Rev.\ Lett.} \textbf{100}, 160501 (2008).

\bibitem{HSR03}
M.~Horodecki, P.~W.~Shor, and M.~B.~Ruskai,
``Entanglement breaking channels,''
\emph{Rev.\ Math.\ Phys.} \textbf{15}, 629--641 (2003).

\bibitem{BohacClosedForm25}
L.~Bohac,
``Closed-Form Optimal Quantum Circuits for Single-Query Identification of Boolean Functions,''
arXiv:2512.15901 (2025).

\end{thebibliography}
\end{document}